\newcommand{\Tr}{\operatorname{Tr}}
\newcommand{\be}{\begin{equation}}
\newcommand{\ee}{\end{equation}}
\newcommand{\ba}{\begin{eqnarray}}
\newcommand{\ea}{\end{eqnarray}}
\newcommand{\tr}{\operatorname{Tr}}
\newtheorem{theorem}{Theorem}
\newtheorem{definition}{Definition}
\newtheorem{observation}{Observation}
\newtheorem{remark}{Remark}
\def\>{\rangle}
\def\<{\langle}
\DeclareRobustCommand{\cev}[1]{%
	{\mathpalette\do@cev{#1}}%
}
\newcommand{\do@cev}[2]{%
	\vbox{\offinterlineskip
		\sbox\z@{$\m@th#1 x$}%
		\ialign{##\cr
			\hidewidth\reflectbox{$\m@th#1\vec{}\mkern4mu$}\hidewidth\cr
			\noalign{\kern-\ht\z@}
			$\m@th#1#2$\cr
		}%
	}%
}
\begin{document}
\title{The principle of information symmetry constrains the state-space in any physical theory}

\author{Manik Banik}
%\email{manik11ju@gmail.com}    
\affiliation{S.N. Bose National Center for Basic Sciences, Block JD, Sector III, Salt Lake, Kolkata 700098, India.}

\author{Sutapa Saha}
%\email{sutapa.gate@gmail.com}    
\affiliation{Physics and Applied Mathematics Unit, Indian Statistical Institute, 203 B.T. Road, Kolkata-700108, India.}

\author{Tamal Guha}
%\email{g.tamal91@gmail.com}    
\affiliation{Physics and Applied Mathematics Unit, Indian Statistical Institute, 203 B.T. Road, Kolkata-700108, India.}

\author{Sristy Agrawal}
%\email{sristyagrawal8@gmail.com}    
\affiliation{Department of Physical Sciences, Indian Institute of Science Education and Research Kolkata,\\ Mohanpur 741246, West Bengal, India.}

\author{Some Sankar Bhattacharya}
%\email{somesankar@gmail.com}    
\affiliation{Department of Computer Science, The University of Hong Kong, Pokfulam Road, Hong Kong.} 

\author{Arup Roy}
%\email{aliphy80@gmail.com}
\affiliation{S.N. Bose National Center for Basic Sciences, Block JD, Sector III, Salt Lake, Kolkata 700098, India.}

\author{A. S. Majumdar}
%\email{manik11ju@gmail.com}    
\affiliation{S.N. Bose National Center for Basic Sciences, Block JD, Sector III, Salt Lake, Kolkata 700098, India.}

\begin{abstract}
Symmetry shares an entwined history with  the structure of physical theory. We propose a consequence of symmetry towards the axiomatic derivation of Hilbert space quantum theory.
We introduce the notion of information symmetry (IS) and show that it constraints the state-space structure in any physical theory. To this end we study the minimal error binary state discrimination problem in the framework of generalized probabilistic theories. A theory is said to satisfy IS if the probability of incorrectly identifying each of two randomly prepared states is same for both the states. It is found that this simple principle rules out several classes of theories while being perfectly compatible with quantum theory.
 \end{abstract}

%\pacs{03.65.Ta,03.65.Ud, 03.67.Dd}
%\keywords{}

% 03.65.Ta	Foundations of quantum mechanics;
% 03.67.Dd	Quantum cryptography and communication security
% 03.67.Hk	Quantum communication
% 03.65.Ud	Entanglement and quantum nonlocality

\maketitle
%\section{Introduction}
\emph{Introduction.--} 
Obtaining a physical perspective of the abstract mathematical description of quantum theory is a long-standing aspiration in quantum foundations. A variety of different approaches, some as old as the theory itself, have attempted to addressed this question, providing deeper understanding about the Hilbert space formulation of the theory  \cite{Birkhoff36,Beltrametti81,Soler95,Gleason57,Haag64,Haag96,Mackey'63,Ludwig'67,Mielnik'68,Clifton03,Abramsky04}. The advent of quantum information theory introduces a new direction to this endeavour. It identifies physically motivated principles excluding a class of multipartite {\it nonlocal} correlations that are strong enough to be incompatible with quantum theory, though weak enough to satisfy relativistic causality or the no-signalling (NS) principle \cite{vanDam, Brassard06, Linden07, Pawlowski09, Navascues09, Fritz13, Das13, Kunkri17, Bhattacharya17, Aravinda18}, thus providing a device-independent outlook about the correlations al
lowed in the physical world \cite{Scarani13,Brunner14}. Another approach is to identify rudimentary rule(s) that directly derive the state space structure or some crucial features of quantum theory \cite{Hardy'01,Aaronson04,Barrett'07,Barnum10,Acin10,Oppenheim10,Masanes'11,Chiribella'11,Dakic11,Muller12,Pfister13,Banik13,Cabello13,Banik15,Chiribella16,Czekaj18,Krumm19,Cabello19}.

Despite a number of non-trivial achievements, a complete physical or first-principles motivation of Hilbert space quantum mechanics is still elusive. In the present work we consider a different approach to address this issue, by investigating the state space structure of  physical theories from the perspective of {\it symmetry, as a principle}. Symmetry has played a long and widespread role in formulating theories of the physical world. Rather than being the by-product of dynamical laws, symmetry principles have been appreciated as primary features of nature, that in turn, determine the fundamental physical laws \cite{Watanabe55,Gross96}. For instance, while formulating the special theory of relativity, Einstein recognized relativistic invariance as a principle, which stipulates the form of transformation rules to be Lorentzian. Later, a similar approach guided him to develop his seminal theory of gravity where the principle of equivalence -- a principle of l
ocal symmetry -- determines the dynamics of space-time. In the context of the present work too, we take symmetry as the guiding feature, though the symmetry we explore here has different consequences. Rather than guiding directly the dynamics, it imposes constraints on the ways of information gain in the act of measurement, and consequently puts restrictions on the structure of state space. 

In order to study the implications of the proposed symmetry, we consider a very generic mathematical framework that allows the largest possible class of convex operational theories, also called generalized probability theories (GPTs). The state space of such a theory is a convex set in $\mathbb{R}^n$ with extreme points \cite{Supple} denoting  pure states or states of maximal knowledge. This framework embraces the notion of indistinguishable states -- members of a set of states that can not be identified perfectly given a single copy of the system prepared in one of these states. For a completely random ensemble of two such states, the most general strategy for minimum-error discrimination comprises of a two-outcome measurement -- the two different outcomes correspond to two different preparations. While extracting information through such a binary measurement,  error can occur in two ways:- (i) outcome-1 that should correspond to state-1 may click even when the system is 
prepared in state-2,  and (ii) outcome-2 may click when the system is prepared in state-1. Our proposed {\it Information Symmetry} (IS) assumes that for any randomly prepared binary ensemble of pure states, optimal information about the preparation is obtained symmetrically from both the states. In other words, the two possible sources of error contribute equally in minimal error state discrimination. Throughout the paper we consider that the pair of states are prepared with uniform probability distribution.

Through the analysis presented herein, we find that this seemingly naive symmetry condition is not satisfied by a large class of GPTs. In particular, we show that regular polygonal state spaces \cite{Janotta'11} with more than $4$ pure states are incompatible with IS. Polygonal state spaces with $4$ pure states, known by the name {\it squit}, also become incompatible with IS when it is applied to the binary ensembles of mixed states. This newly identified symmetry property turns out to be pivotal in determining the state space structure of physical theories as we find that both classical and quantum theory are perfectly compatible with IS. We begin our analysis with a brief discussion on the mathematical framework of GPTs.

%\section{Framework}
\emph{Framework.--} The structure of any operational theory consists of three basic notions -- state or preparation, observable or measurement, and transformation \cite{Hardy'01,Barrett'07,Masanes'11,Chiribella'11}. While observables correspond to the possible choices of measurement on the system, its initial preparation is represented by a state, and the time evolution of the state is governed  by some transformation rule. In the {\it prepare and measure} scenario the state and observable together yields the statistical prediction of an outcome event. 

Preparation or state $\omega$ of a system specifies outcome probabilities for all measurements that can be performed on it. A complete specification of the state is achieved by listing the outcome probabilities for measurements belonging to a `fiducial set’ \cite{Hardy'01,Barrett'07}. The set $\Omega$ of all possible states is a compact and convex set embedded in the positive convex cone $V_+$ (see \cite{Supple} for precise definition) of some real vector space $V$. Convexity of $\Omega$ assures that any statistical mixture of states is a valid state. The extremal points of the set $\Omega$ are called pure states. For example, state of a quantum system associated with Hilbert space $\mathcal{H}$ is described by positive semi-definite operator with unit trace, {\it i.e.}, a density operator $\rho\in\mathcal{D}(\mathcal{H})$, where $\mathcal{D}(\mathcal{H})$ denotes the set of density operators acting on $\mathcal{H}$. For the simplest two level quantum syste
m (also called a qubit) $\mathcal{D}(\mathbb{C}^2)$ is isomorphic to a unit sphere in $\mathbb{R}^3$ centered at the origin, where points on the surface correspond to pure states.

An effect $\mathit{e}$ is a linear functional on $\Omega$ that maps each state onto a probability $p(\mathit{e}|\omega)$ representing successful filter of the effect $e$ on the state $\omega$. Unit effect $u$ is defined as, $p(u|\omega)=1,~\forall~\omega\in\Omega$. The set of all linear functionals forms a convex set embedded in the cone $V_+^*$ dual to the state cone $V_+$. The set of effects is occasionally denoted as $\Omega^*\subset V_+^*$. A $d$-outcome measurement $M$ is specified by a collection of $d$ effects, {\it i.e.}, $M\equiv\{\mathit{e}_j~|~\sum_je_j=u\}$. For every effect $e$ one can always construct a dichotomic measurement $M:=\{e,\bar{e}\}$ such that $p(e|\omega)+p(\bar{e}|\omega)=1,~\forall~\omega\in\Omega$; $\bar{e}$ is called the complementary effect of $e$. Likewise the states, effects can also be characterized as pure and mixed ones. Framework of GPTs may assume, a priori, that not all mathematically well-defined states are allowed physi
cal states and not all mathematically well-defined observables are allowed physical operations. For example, the set of physically allowed effects $\mathcal{E}$ may be a strict subset of $\Omega^*$. A theory is called ‘dual’ if it allows all elements of $\Omega^*$ as valid effects \cite{Self1}. In this generic framework of probabilistic theory, one can define the notion of distinguishable states. 
\begin{definition}
	Members of a set of $n$ states $\{\omega_i\}_{i=1}^n\subset\Omega$ are called distinguishable if they can be perfectly identified in a single shot measurement {\it i.e.}, if there exists an $n$-outcome measurement $M=\{e_j~|~\sum_{j=1}^ne_j=u\}$ such that $p(\mathit{e}_j|w_i)=\delta_{ij}$.
\end{definition}
Not every set of states can be perfectly discriminated. However, a set of such indistinguishable states can be distinguished probabilistically allowing one to define the following  state discrimination task. Suppose one of the states chosen randomly from the pair $\{\omega_1, \omega_2\}\subset\Omega$ is given. The aim is to optimally guess the correct state while one copy of the system is provided. Without loss of generality one can perform a two outcome measurement $M=\{e_1,e_2~|~e_1+e_2=u\}$ and guess the state as $\omega_i$ while the effect $e_i$ clicks. The error in guessing can occur in two ways -- effect $e_1$ clicks when the given state is actually $\omega_2$ which happen with probability $p_{12}:=p(e_1|\omega_2)$, and with $p_{21}:=p(e_2|\omega_1)$ probability effect $e_2$ clicks when the given state is actually $\omega_1$. As the states are chosen with uniform probability, the total error is therefore $p_E=\frac{1}{2}(p_{12}+p_{21})$, and hence, the 
probability of successful guessing is  $p_I=1-p_E$. The measurement that minimizes the error $p^{\min}_E:=\min_Mp_E$ is known as the Helstrom measurement, initially studied for quantum ensembles in 1970's \cite{Helstrom'69,Holevo73,Yuen75} and more recently, also studied in the GPT framework \cite{Kimura'09,Bae11,Nuida'10,Bae16}. While $e_1$ and $e_2$ used in the above discrimination task are mixed effects in general, however in the Helstrom measurement one of them is a pure effect.
\begin{remark}\label{remark1}
	For any pair of indistinguishable states in a GPT the measurement that optimally discriminates the states consists of a pure effect and its complementary effect. 
\end{remark}
In a GPT a pure state corresponds to the state of maximal knowledge. While in binary state-discrimination problem a pair of such states are given randomly with uniform probability distribution, it seems that both states should contribute identically in the error probability of optimal guessing. This leads us to the following definition.  
\begin{definition}
	A GPT is said to satisfy information symmetry (IS) if $p_{12}=p_{21}$ in $p_E^{\min}$ for every pair of pure states allowed in that GPT. In orther words, for any pair of pure
	states, maximum information about the ensemble is obtained only if both states contribute symmetrically to this quantity.
\end{definition}
%In a theory that satisfies IS, any effort to maximize information about one of the randomly prepared pair of pure states leads to an overall reduction in the total information about the system. 
Classical theory trivially satisfies IS as all the pure states are perfectly distinguishable. The classical state space with $d$ number of perfectly distinguishable states is a $(d-1)$-simplex. In quantum mechanics,  there however exists indistinguishable pure states. For a pair of such pure states, $\psi\equiv\ket{\psi}\bra{\psi},~\phi\equiv\ket{\phi}\bra{\phi}\in\mathcal{D}(\mathcal{H})$ the minimum error state discrimination (MESD) is obtained through Helstrom measurement \cite{Helstrom'69,Holevo73,Yuen75}. While $\psi$ and $\phi$ are prepared randomly with equal probability, the measurement $M\equiv\left\lbrace E_{\psi},~E_{\phi}~|~E_{\psi},~E_{\phi}\in\mathcal{L}^+(\mathcal{H})~s.t.~E_{\psi}+E_{\phi}=\mathbb{I}\right\rbrace $ achieving MESD is the one consisting of projectors onto the basis that ``straddles" $\psi$ and $\phi$ in Hilbert space, and we have $p_E^{\min}=\frac{1}{2}\left( 1-\sqrt{1-|\braket{\psi|\phi}|^2}\right)$ \cite{Supple}; $\mathcal{L}^+(\mathcal{H})$ i
s the set of positive operators on $\mathcal{H}$. Although IS holds true in classical and quantum theory, we  now show that the class of GPTs with regular polygonal state spaces are not compatible with it. 
\begin{figure}[t]
	\includegraphics[scale=0.3]{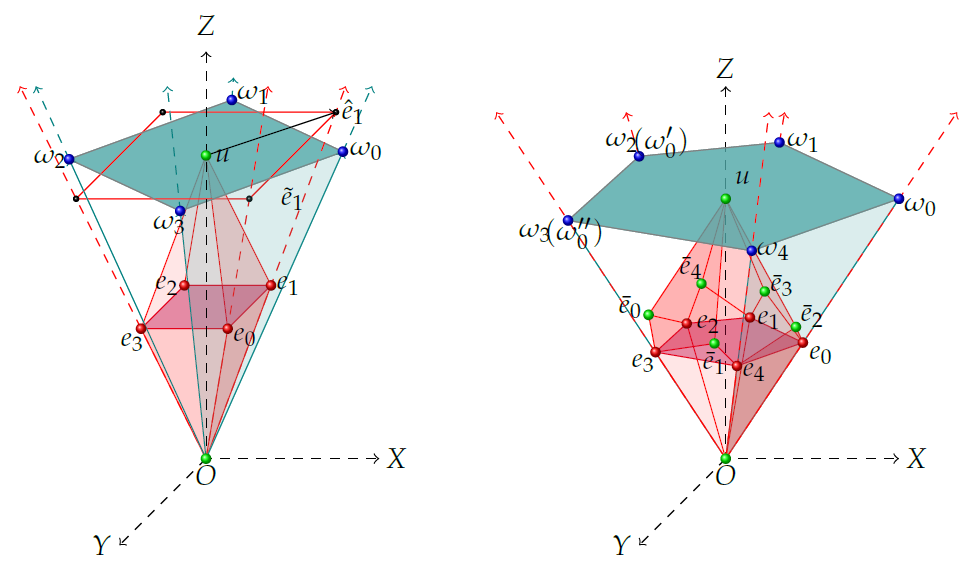}
	\caption{(Color on-line) State and effect spaces of {\it squit} (left) and {\it pentagon} (right) models. Blue dots are the extremal states and red dots denote the ray extremal effects. Green dots denote extremal effects that are not ray extremal.  In the squit model the effect $e_1$ is scaled up to $\tilde{e}_1$ so that its tip (black dot) lies on the normalized states space (green surface) and it can be represented as $\tilde{e}_1=u+\hat{e}_1$. In the pentagon model, the state $\omega_{0}$ and the states $\omega_{0}^{(\eta)}:=\eta\omega_{0}^{\prime}+(1-\eta)\omega_{0}^{\prime\prime}\equiv\eta\omega_2+(1-\eta)\omega_3$ are perfectly distinguishable by the dichotomic measurement $M\equiv=\{e_0,\bar{e}_0\}$, for all $\eta\in[0,1]$.}\label{poly}
\end{figure}

\emph{Regular polygonal state spaces.--} An associated toy theory for bipartite systems was first proposed to demonstrate the possibility of no-signaling theories which can have nonlocal behavior similar to quantum mechanics \cite{Popescu'94}. This entails the need to exclude such theories by providing new {\it physical} principles. In fact, several successful attempts have been made to exclude stronger than quantum nonlocal correlations \cite{vanDam, Brassard06, Linden07, Pawlowski09, Navascues09, Fritz13, Das13, Kunkri17, Bhattacharya17, Aravinda18}. Here we take a different approach. We aim to exclude a large class of such theories by invoking principle(s) that consider only the elementary system, {\it i.e.}, single partite system.

For an elementary system the state space $\Omega_n$ is a regular polygon with $n$ vertices \cite{Janotta'11,Weis12,Janotta13,Massar14,Janotta14,Safi15,Bhattacharya18}. For a fixed $n$, $\Omega_n$ is the convex hull of $n$ pure states $\{\omega_i\}_{i=0}^{n-1}$ with $\omega_i:=\left(r_n \cos(\frac{2 \pi i}{n}), r_n \sin(\frac{2 \pi i}{n}),1 \right)^T\in\mathbb{R}^3$; where $T$ denotes transpose and $r_n:= \sqrt{\sec(\pi/n)}$. The unit effect is given by $u:=(0,0,1)^T$. The set $\mathcal{E}$ of all possible measurement effects consists of convex hull of zero effect, unit effect, and the extremal effects $\{e_i,\bar{e}_i\}_{i=0}^{n-1}$, where $e_i:= \frac{1}{2}\left(r_n \cos(\frac{(2 i-1) \pi}{n}),r_n \sin(\frac{(2 i-1) \pi}{n}),1\right)^T$ for even $n$ and $e_i:=\frac{1}{1 + {r_n}^2}\left(r_n \cos(\frac{2 \pi i}{n}),r_n \sin(\frac{2 \pi i}{n}),1\right)^T$ for odd $n$.

The pure effects $\{e_i\}_{i=0}^{n-1}$ correspond to exposed rays and consequently the extreme rays of $V_+^*$ \cite{Supple,Yopp07}. For odd-gonal cases, due to self-duality of state cone $V_+$ and its effect cone $V_+^*$ \cite{Kimura14} every pure effect $e_i$ has one to one ray-correspondence to the pure state $\omega_i$. Consequently, for every pure state $\omega_i$ there exist exactly two other pure states $\omega_i^\prime$ and $\omega_i^{\prime\prime}$ such that $\omega_i$ and $\bar{\omega}_i^{(\eta)}:=\eta\omega_i^\prime+(1-\eta)\omega_i^{\prime\prime}$ are always perfectly distinguishable for all $\eta\in[0,1]$ (see Fig. \ref{poly}). The discriminating measurement consists of the effects $\{e_i,\bar{e}_i\}$ such that $p(e_i|\omega_i)=1$ and $p(e_i|\bar{\omega}_i^{(\eta)})=0$. The effects $\{\bar{e}_i\}_{i=0}^{n-1}$ are extremal elements of $\mathcal{E}$ but they are not ray extremal, i.e., they do not lie on an extremal ray of the cone $V_+^*$ \cite{Self2}. For an even
-gon, t
he scenario is quite different as the self duality between $V_+$ and $V_+^*$ is absent. Here all the $e_i$'s and their complementary effects $\bar{e}_i$'s correspond to extreme rays of $V_+^*$.

Every ray-extremal effect $e$ generates an extreme ray $\lambda e$ for the cone $V_+^*$, where $\lambda\ge 0$. With proper choice of $\lambda$ any such $e$ can be scaled up to a new $\tilde{e}\equiv\lambda e$, such that the tip of this scaled effect vector $\tilde{e}$ lies on the normalized state plane. Let us consider a particular direct sum decomposition of the space $\mathbb{R}^3$, {\it i.e.}, $\mathbb{R}^3=\mathbb{R}u\oplus \hat{V}$, where $\hat{V}$ is a two dimensional subspace of $\mathbb{R}^3$ parallel to the $X-Y$ plane. This allows a particular representation of $\tilde{e}$ in the following way $\tilde{e}=u+\hat{e}$, where $\hat{e}\in\hat{V}$. Similarly, every $\omega\in\Omega$ has a representation $\omega=u+\hat{w}$, with $\hat{w}\in\hat{V}$ (see Fig. \ref{poly}). In this representation the outcome probability of the effect $e$ on the state $\omega$ reads  $p(e|w)= \lambda p(\tilde{e}|\omega)=\lambda(u + \hat{e}).(u + \hat{\omega})=\lambda(1+\hat{e}.\hat{\omega})$  
\cite{Self3}, where dot represents euclidean inner-product in $\mathbb{R}^n$. Set of the vectors $\hat{\omega}$ corresponding to the states $\omega\in\Omega$ forms a convex-compact set $\hat{W}_{s}\subset\hat{V}$. For the n-gonal case, norm of these vectors satisfy the bound $||\hat{\omega}||_{2}\le r_{n}$ with exactly $n$ vectors saturating the bound. Similarly the vectors $\hat{e}$ forms another convex-compact set $\hat{W}_{e}\subset\hat{V}$ and $||\hat{e}||_{2}\le r_{n}$ with exactly $n$ vectors saturating the bound. Self-duality for the odd-gonal cases imply $\hat{W}_{s}=\hat{W}_{e}$ which is not the case for even $n$. 
\begin{theorem}\label{theorem1}
	GPTs with state space $\Omega_n$ with $n>3$ (for odd $n$) and with $n>4$ (for even $n$) are not compatible with IS.
\end{theorem}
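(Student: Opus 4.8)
The plan is to turn information symmetry into one geometric condition on the optimal effect and then violate it. By Remark~\ref{remark1} the minimum-error measurement for pure states $\omega_a,\omega_b$ is $\{e,u-e\}$ with $e$ pure; guessing $\omega_a$ when $e$ clicks gives $p_{12}=p(e|\omega_b)$ and $p_{21}=1-p(e|\omega_a)$, so IS for this pair is exactly $p(e|\omega_a)+p(e|\omega_b)=1$. In the representation above one has $p(e|\omega)=c_n\,(1+\hat e\cdot\hat\omega)$ with $c_n=\tfrac12$ for even $n$ and $c_n=(1+r_n^2)^{-1}$ for odd $n$, whence $p_E=\tfrac12\big(1-c_n\,\hat e\cdot(\hat\omega_a-\hat\omega_b)\big)$. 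Thus the optimal pure effect $\hat e^\ast$ is the one whose direction is angularly closest to $t:=\hat\omega_a-\hat\omega_b$, and IS reduces to $\hat e^\ast\cdot s=\kappa_n$, where $s:=\hat\omega_a+\hat\omega_b$ and $\kappa_n:=c_n^{-1}-2$ (so $\kappa_n=0$ for even $n$ and $\kappa_n=r_n^2-1$ for odd $n$). Since $s\perp t$, $\|s\|=2r_n\cos(\Delta/2)$ with $\Delta$ the angular separation of the pair, and $\|\hat e^\ast\|=r_n$, writing $\delta$ for the angle between $\hat e^\ast$ and $t$ gives $|\hat e^\ast\cdot s|=2r_n^2\cos(\Delta/2)\sin\delta$; everything then reduces to the offset $\delta$ of the nearest pure effect from $t$.

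For odd $n$ I would take the adjacent pair $\omega_0,\omega_1$. A direct check shows this pair is indistinguishable exactly for $n\ge5$ (at $n=3$ the polygon is a simplex and the pair is perfectly distinguishable, which is why $n=3$ is excluded). The pure effects sit at the state angles $2\pi j/n$, the vector $t$ points along $\pi/n-\pi/2$, and the nearest effect lies at offset $\delta=\pi/2n$ exactly. Substituting $r_n^2=\sec(\pi/n)$ and $\Delta=2\pi/n$, the IS equation $2r_n^2\cos(\Delta/2)\sin\delta=r_n^2-1$ collapses to $\cos(\pi/n)=\sin(\pi/2n)$, which holds only at $n=3$. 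Hence IS fails for every odd $n\ge5$.

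For even $n$ the pure effects are offset by half a step from the states, so the direction $t$ (perpendicular to the bisector of the pair) always lands either exactly on a pure-effect direction, where $\delta=0$ and IS holds for that pair, or exactly on a state direction, where $t$ is equidistant from two neighbouring pure effects and $\delta=\pi/n$. The theorem therefore hinges on exhibiting an \emph{indistinguishable} pair of the second, equidistant type. Using that a pure state of $\Omega_n$ is perfectly distinguishable only from its antipode and the antipode's two neighbours, I would take $\omega_0,\omega_1$ when $n\equiv2\pmod4$ and $\omega_0,\omega_2$ when $4\mid n$; both are indistinguishable for $n>4$ and both are of equidistant type. For such a pair each minimum-error pure effect gives $|\hat e^\ast\cdot s|=2r_n^2\cos(\Delta/2)\sin(\pi/n)\ne0=\kappa_n$, so no pure-effect Helstrom measurement is symmetric and IS fails.

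The step I expect to be the main obstacle is precisely this even-$n$ equidistant situation. There the minimum error is also attained by a \emph{mixed} effect lying on the optimal edge between the two tied pure effects, and that mixed effect points along $t$ and is therefore symmetric. One must argue, via Remark~\ref{remark1}, that IS is to be read off a minimum-error measurement built from a \emph{pure} effect, so that the existence of a symmetric mixed optimum does not reinstate IS. The remaining technical points — verifying the offsets $\delta=\pi/2n$ (odd) and $\delta=\pi/n$ (even), and checking that the chosen pairs are genuinely indistinguishable, since perfect discrimination would make IS hold vacuously — are routine angle bookkeeping modulo $n$.
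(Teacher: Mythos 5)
Your proposal is correct and takes essentially the same route as the paper's proof: both rest on Remark~\ref{remark1}, the planar representation in which $p_E=\tfrac12\bigl(1-c_n\,\hat e\cdot(\hat\omega_a-\hat\omega_b)\bigr)$, and explicit angle bookkeeping for a witnessing pair (the adjacent pair for odd $n$; even- versus odd-separation pairs according as $n=4m$ or $n=4m+2$), your condition $\hat e^\ast\cdot s=\kappa_n$ with offsets $\delta=\pi/(2n)$ and $\delta=\pi/n$ being a clean repackaging of the paper's Eq.~\eqref{E_min} and supplementary Eqs.~\eqref{4m}--\eqref{4m+2}. The tie/mixed-optimum subtlety you flag is handled in the paper exactly as you propose (IS is evaluated on pure-effect Helstrom measurements via Remark~\ref{remark1}); note only that the same degeneracy also arises for odd $n$, where the complementary effects $\bar e_i$ furnish a second tied extremal optimum whose midpoint with $e_{i^\ast}$ is a symmetric mixed optimum, so your pure-effect reading is needed there as well.
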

\begin{proof}
	Let us first consider an n-gonal state space with odd $n$ and $n>3$.
	Without loss of generality, consider the two neighboring states $\omega_0, \omega_1\in \Omega_n$. According to Remark \ref{remark1} the measurement that optimally discriminate these states  consists of one of the effects corresponding to the vectors $\hat{e}_{n-k}\in\hat{W}_{e}$ such that $||\hat{e}_{n-k}||_{2}=r_{n}$, with $k\in\{0,\cdots,n-1\}$ and its complementary effects. With such a measurement the error reads as,
	\begin{equation} \label{E_min}
	p_E=\frac{1}{2}\left[1+\frac{1}{1+r_{n}^{2}} \hat{e}_{n-k}.(\hat{\omega}_{0}-\hat{\omega}_{1})\right].
	\end{equation}
	Let us denote the angle between $\hat{e}_{n-k}$ and $(\hat{\omega}_{0}-\hat{\omega}_{1})$ as $\theta_k$. It is evident from \eqref{E_min} that for minimal error $k$ should be chosen in a way that $|\theta_k-\pi|\to0$. However, the self-duality of odd-gonal theory demands that $\theta_k=\frac{\pi}{2}+(2k+1)\frac{\pi}{n}$. Then, a straightforward calculation shows that minimal error discrimination is achieved for $k=[\frac{n}{4}]$. For this optimal measurement, probability of clicking $e_{n-k}$ when the input state is $\omega_0$ is given by $p=\frac{1}{1+r_{n}^2}[1+r_{n}^2\cos\{\frac{2\pi}{n}(k+1)\}]$ and probability of clicking $\bar{e}_{n-k}$ on $\omega_1$ is given by $\bar{p}=\frac{r_{n}^2}{1+r_{n}^2}[1-\cos(\frac{2\pi}{n}k)]$. An elementary trigonometric argument ensures that the probabilities $p$ and $\bar{p}$ are not same for any $\Omega_n$, with odd $n$ and $n\ge3$. The absolute difference of these two probabilities, however, decreases with increasing $n
		$ (see Fig.\ref{figodd}). 
	
	The proof for the even-gonal case is similar to the odd-gonal case. We provide the detailed proof in the Supplemental Materials \cite{Supple}.  
\end{proof}

\begin{figure}[t]
	\includegraphics[scale=0.65]{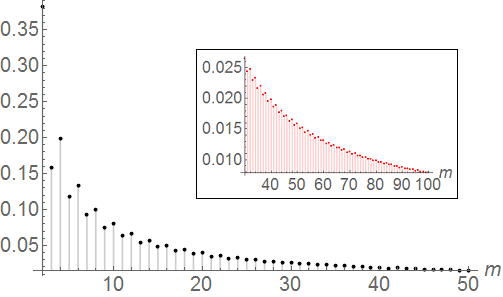}
	\caption{(Color on-line) Absolute difference between $p$ and $\bar{p}$ is plotted against $m\ge2, m\in \mathbb{Z}$, where $\Omega_{(2m+1)}$ is the corresponding odd-gon state space. Inset depicts magnified plot for higher values of $m$.}\label{figodd}
\end{figure}
We have shown that all polygonal state spaces $\Omega_n$, with $n\ge5$ are incompatible with IS. Now the question arises as to what happens for $n=4$ which corresponds to the marginal state space of the most general two-input-two output bipartite NS correlations. This state apace is also known by the name {\it squit} whose center corresponds to the marginal state of the famous Popescu-Rohrlich correlation \cite{Popescu'94}. Here we have the following observation about squit state space.
\begin{observation}\label{lemma2}
	Any pair of pure states in squit can be discriminated perfectly. 
\end{observation}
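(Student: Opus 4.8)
The plan is to prove the observation constructively: for each of the $\binom{4}{2}=6$ pairs of pure states I will exhibit a single dichotomic measurement built from the extremal effects of the squit that identifies the two states with certainty. Recall that perfect discrimination of $\omega_i$ and $\omega_j$ means finding an effect $e$ with $p(e|\omega_i)=1$ and $p(e|\omega_j)=0$, so that $\{e,\bar e\}$ never errs; here the outcome probability is the Euclidean pairing $p(e|\omega)=e\cdot\omega$ in $\mathbb{R}^3$, which already returns $p(u|\omega)=1$ for the unit effect $u=(0,0,1)^T$. Specializing the polygon formulas to $n=4$, I would first record that $r_4=\sqrt{\sec(\pi/4)}=2^{1/4}$, so $r_4^2=\sqrt2$, that the four pure states $\omega_0,\omega_1,\omega_2,\omega_3$ sit at angles $0,\tfrac{\pi}{2},\pi,\tfrac{3\pi}{2}$ in the plane $z=1$, and that the four extremal effects $e_0,e_1,e_2,e_3$ sit at the half-angles $\tfrac{\pi}{4},\tfrac{3\pi}{4},\tfrac{5\pi}{4},\tfrac{7\pi}{4}$, with $\bar e_0=e_2$ and $\bar e_1=e_3$.

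The heart of the argument is a one-line evaluation of $e\cdot\omega$. Computing $e_0$ against the four states gives the pattern $\big(p(e_0|\omega_0),p(e_0|\omega_1),p(e_0|\omega_2),p(e_0|\omega_3)\big)=(1,1,0,0)$: the planar contributions $\hat e_0\cdot\hat\omega_j=\pm\tfrac12$ combine with the common $z$-contribution $\tfrac12$ to land exactly on $1$ or $0$. Geometrically this says $e_0$ equals $1$ on the whole edge $\omega_0\omega_1$ and $0$ on the opposite edge $\omega_2\omega_3$. The identical computation for $e_1$ returns $(0,1,1,0)$, i.e. $e_1$ is $1$ on the edge $\omega_1\omega_2$ and $0$ on the parallel edge $\omega_3\omega_0$. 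Thus the two measurements $\{e_0,\bar e_0\}$ and $\{e_1,\bar e_1\}$ realize the two families of parallel faces of the square.

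It then remains only to read off a separating effect for each pair. The measurement $\{e_0,\bar e_0\}$ distinguishes any member of $\{\omega_0,\omega_1\}$ from any member of $\{\omega_2,\omega_3\}$ with certainty, which settles the four pairs $(\omega_0,\omega_2)$, $(\omega_0,\omega_3)$, $(\omega_1,\omega_2)$, $(\omega_1,\omega_3)$; the two remaining pairs $(\omega_0,\omega_1)$ and $(\omega_2,\omega_3)$ share a common $e_0$-face, but are split by the transverse measurement $\{e_1,\bar e_1\}$ since $e_1$ accepts $\omega_1,\omega_2$ and rejects $\omega_0,\omega_3$. Every pair is therefore discriminated without error, which is the claim.

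I do not anticipate a genuine obstacle, in contrast with Theorem~\ref{theorem1}: the work is entirely finite and computational. The one point deserving care is checking that the effects attain the \emph{exact} values $0$ and $1$ on the relevant vertices -- it is precisely this coincidence, special to $n=4$, that makes even adjacent pure states perfectly distinguishable, whereas for larger polygons the analogous effect would assign some vertex an intermediate value. Organizing the six pairs through the two parallel-edge families keeps this bookkeeping transparent and avoids six separate calculations.
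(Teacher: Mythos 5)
Your proof is correct and is in substance the same as the paper's: the supplementary argument for Observation~1 also rests on the fact that any pair of pure states of the squit lies on a pair of parallel faces (edges) of the square, with a pure effect taking the value $1$ on one edge and $0$ on the parallel one, so that the corresponding dichotomic measurement never errs. Your explicit coordinate computation (the $(1,1,0,0)$ and $(0,1,1,0)$ patterns for $e_0$ and $e_1$) simply verifies concretely what the paper asserts geometrically via supporting hyperplanes, with only an immaterial relabelling of the effect indices relative to the paper's formula.
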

It is possible to generalize IS that applies to the ensemble of mixed states. A GPT is said to be compatible with generalized information symmetry (GIS) if every pair of states each having identical {\it minimal} type subjective ignorance can be optimally discriminated with symmetric error measurement. While a pure state is the state of maximal knowledge, {\it i.e.}, contains no subjective ignorance, a state $\omega$ is said to have {\it minimal} type subjective ignorance if it allows a convex decomposition in terms of two distinguishable pure states, {\it i.e.}, $\omega=p\omega_i+(1-p)\omega_j$ for some perfectly distinguishable pair of pure states $\omega_i$ and $\omega_j$. Two such states $\omega=p\omega_l+(1-p)\omega_j$ and $\omega^\prime=q\omega_k+(1-q)\omega_l$ are said to have identical subjective ignorance when $p=q$. It turns out that {\it squit} state space does not satisfy GIS while quantum theory is perfectly compatible with GIS \cite{Supple}.

\emph{Discussions.--}
The newly identified symmetric primitive, namely the Information Symmetry, has important implications in the axiomatic derivation of Hilbert space quantum mechanics as it puts nontrivial restrictions  on the state space structure of generalized probabilistic models. While the state space of quantum theory is perfectly compatible with IS, we find that the polygonal state spaces do not satisfy this elementary symmetry condition, or its generalized version.

In this context it is worth mentioning a couple of other features of the structure of GPTs, which though interesting, are not powerful enough to exclude various categories of models
while allowing for quantum and classical mechanics in the manner of IS. First, the notion of logical bit-symmetry \cite{Muller12} imparts self-duality on the state space leading to the exclusion of even-gonal state spaces only, but not the odd-gonal ones  \cite{Janotta'11}. Secondly, polygonal 
state spaces lack  well defined purification for all states \cite{Winczewski18}. However,  the state space of the classical bit also lacks this particular property, whereas it satisfies IS. On the other extreme,  the 'toy bit' model of  Spekkens \cite{Spekkens07} does not satisfy IS though it may allow well defined purification \cite{Supple}. 

To summarize, IS imparts a remarkable restriction on the state space structure, excluding all regular polygonal state spaces as well as the Spekkens  model, thus representing a more stringent structural constraint compared to self-duality. Moreover, unlike bit-symmetry, IS assumes no constraint on the dynamics of the theory. Before concluding, note that though it can be shown that
the state space of the bipartite NS box with a Bell measurement is equivalent to a Bloch ball \cite{Czekaj18}, the formulation of IS is more general and does not involve any structure from composite systems. Finally, it may be interesting to explore implications of IS on other state space structures as well as generalizations of IS for ensembles prepared with bias.

\begin{acknowledgments}
	We would like to acknowledge stimulating discussions with Guruprasad Kar and Ashutosh Rai. SSB acknowledges fruitful discussions with Giulio Chiribella and Michele Dall'Arno. MB likes to acknowledge discussions with Karol Horodecki during QIPA-18 at HRI, Allahabad, India. SA acknowledges the support through Research Grant of INSPIRE Faculty Award of MB which supported her visit at S. N. Bose National Center for Basic Sciences. SSB acknowledges his visit at S. N. Bose National Center for Basic Sciences. MB acknowledges support through an INSPIRE-faculty position at S. N. Bose National Center for Basic Sciences by the Department of Science and Technology, Government of India. SSB is supported by the John Templeton Foundation through grant 60609, Quantum Causal Structures. The opinions expressed in this publication are those of the authors and do not necessarily reflect the views of the John Templeton Foundation. ASM acknowledges support from the DST project DST/ICPS/QuEST/2019/
	Q98.
\end{acknowledgments}

\onecolumngrid
\section{Supplementary}
\subsection{Elements of Convex Geometry}
Here we recall some definitions from convex geometry \cite{Book1,Book2} that are relevant for our purpose.

{\bf Definition} [{\it Convex set}]. A set $\mathcal{S} \subset \mathbb{R}^{n}$ is said to be convex {\it if and only if} $\lambda x_{1}+(1-\lambda) x_{2}\in \mathcal{S},~~\forall x_{1}, x_{2}\in \mathcal{S}$ where $\lambda\in[0,1]$.

The points residing on the boundary of a closed convex set $\mathcal{S}$, which can not be written as a strict convex combination of other two distinct points in $\mathcal{S}$ are said to be {\it extreme points} of the set $\mathcal{S}$. More precisely, 

{\bf Definition} [{\it Extreme point}]. A point $x\in \mathcal{S}$ is said to be extreme point {\it if and only if} $x=\lambda x_{1} +(1-\lambda) x_{2} \implies x_{1}=x_{2}=x$, where $\lambda\in(0,1)$ and $x_{1},x_{2}\in \mathcal{S}$.

A convex set is called convex cone if it satisfies a further condition.

{\bf Definition} [{\it Convex cone}]. A convex set $\mathcal{C}$ is said to be a convex cone {\it if and only if} $\forall x\in \mathcal{C},~~\lambda x \in \mathcal{C}$ for every $\lambda \geq 0$.

An important notion in convex geometry is the {\it Face} of a convex set. Face can be defined in two different ways: -- geometrically and algebraically. These two definitions exhibit potential difference in case of non-polyhedral convex sets.

Let us use the following notations:\\
$\mathcal{M}^n\Rightarrow$ set of symmetric $n \times n$ matrices;\\
$\mathcal{M}^n_+\Rightarrow$ set of symmetric positive semi-definite matrices, {\it i.e.}, $\mathcal{M}^n_+:=\{M\in\mathcal{S}^n~|~M\ge0\}$;\\
$\mathcal{M}^n_{++}\Rightarrow$ set of symmetric positive definite matrices, {\it i.e.}, $\mathcal{M}^n_{++}:=\{M\in\mathcal{S}^n~|~M>0\}$.\\
Both $\mathcal{M}^n$ and $\mathcal{M}^n_+$ are convex cones, and the later is called positive semi-definite convex cone. $\mathcal{M}^n_{++}$
comprises cone interior of $\mathcal{M}^n_+$.

{\bf Definition} [{\it Face -- Geometric}]. A subset $\mathcal{F}$ is called a face of the convex set $\mathcal{P}\subseteq\mathbb{R}^{n}$ if there exists a supporting hyperplane of  $\mathcal{P}$ as, $\langle v, *\rangle=d$, such that,  $\mathcal{F}=\mathcal{P}\cap\{x| \langle v, x \rangle=d\}$. Evidently, $\phi$ and $\mathcal{P}$ by itself are the trivial faces of $\mathcal{P}$.

{\bf Definition} [{\it Face -- Algebraic}]. A face $\mathcal{F}$ of a convex set $\mathcal{P}\subseteq\mathbb{R}^{n}$ is a closed convex subset of $\mathcal{P}$ such that for any $x\in \mathcal{F}$ and any line segment $[a,b]\subset\mathcal{P}$ with $x\in(a,b)$ implies $a,b\in \mathcal{F}$.

The geometric definition of face gives the idea of an {\it exposed face}. However, the algebraic definition of face captures both {\it exposed} and {\it non-exposed faces}. In the Fig.\ref{Fig3} a sharp distinction between the above two faces are explained.\\
\begin{figure}[h!]
	\includegraphics[scale=0.25]{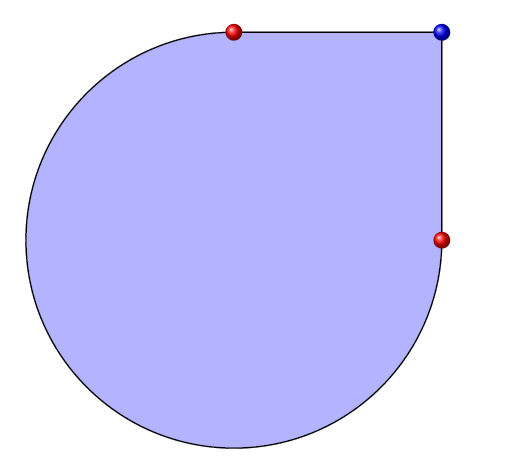}
	\caption{(Color on-line) On the non-polyhedral set two red points denote the extreme but non-exposed faces, however the blue point represents the exposed face. It is also evident that although the blue point satisfies the geometrical criterion of being a face, the red points do not. However, all of them are accepted according to the algebraic definition of face.}\label{Fig3}
\end{figure}
Another important notion in convex geometry is the {\it ray} of a convex cone. For any $x$ in the convex cone $\mathcal{C}$, $\lambda x$ is said to be a ray generated by $x$ for $\lambda\geq0$.

{\bf Definition} [{\it Extreme ray}]. The ray generated by a non-null point $x\in \mathcal{C}$ is said to be an extreme ray for $\mathcal{C}$ if the ray $\{\lambda x|\lambda\geq0\}$ is a face of $\mathcal{C}$.

Another refined notion  about ray of a cone is {\it exposed ray}. 

{\bf Definition} [{\it Exposed ray}]. The extreme ray generated by a non-null point $x\in \mathcal{C}$ is said to be an exposed ray for $\mathcal{C}$ if the ray is an exposed face of $\mathcal{C}$.

In the structure of general probabilistic theories, the ray joining the {\it null effect} and any of the pure states, is an exposed ray for the state cone. A non-exposed extreme ray can only exist for a non-polyhedral convex cone.

\subsection{Argument supporting Remark 1}
Consider two arbitrary states $\omega_1,\omega_2\in\Omega$. Suppose that the observable $M\equiv\{e_1,e_2~|~e_1+e_2=u\}$ discriminates the pair of states is a way that the state $\omega_i$ is guessed while the effect $e_i$ clicks. Probability of error in guessing is therefore,
\begin{equation}\label{a1}
p_E=\frac{1}{2}\left[p_{12}+p_{21}\right] =\frac{1}{2}\left[p(e_1|\omega_2)+p(e_2|\omega_1)\right] .
\end{equation}
Effects $e_i$'s are in general mixed and allow convex decomposition in terms of pure effects, {\it i.e.}, $e_i=\sum_{k}p_{k}e^{k}_i$, where all $e^{k}_i$'s are pure effects and $p_k\ge0~\forall~k~\&~\sum_kp_k=1$. Therefore, we can write $p(e_1|\omega_2)=\sum_{k}p_{k}p(e^{k}_1|\omega_2)$ and $p(e_2|\omega_1)=p(u-e_1|\omega_1)=1-\sum_{k}p_{k}p(e^{k}_1|\omega_2)$, and consequently Eq.(\ref{a1}) becomes,
\begin{equation}\label{a2}
p_E=\frac{1}{2}\left[1-\sum_kp_kp(e^k_1|\omega_2-\omega_1)\right] .
\end{equation}
From this expression, it is clear that minimum error occurs when $\sum_kp_kp(e^k_1|\omega_2-\omega_1)$ gets maximized which will be obtained for some pure effect.
\subsection{Quantum theory and IS}
\begin{figure}[b]
	\includegraphics[scale=0.3]{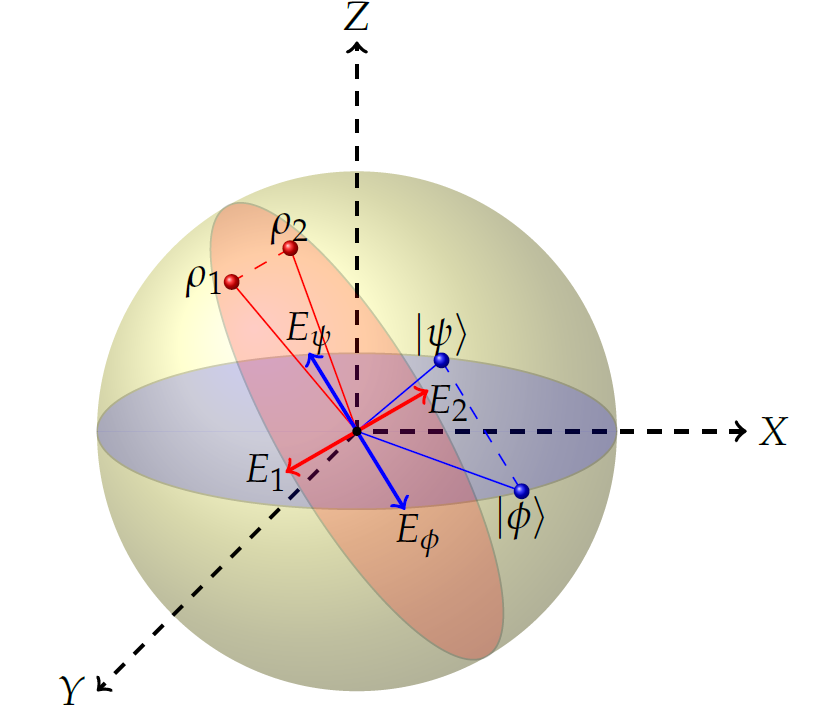}
	\caption{(Color on-line) Bloch sphere: state space of a two-level quantum system (qubit). For randomly prepared pure states $\ket{\psi}$ and $\ket{\phi}$ MESD is obtained through Helstrom measurement $M=\{E_{\psi},E_{\phi}\}$. The angle between the vectors corresponding to the state $\ket{\psi}$ and projector $E_{\psi}$ is equal to the angle between the vectors corresponding to the state $\ket{\phi}$ and projector $E_{\phi}$. Vectors denoted by red color indicate a similar fact for two equidistant mixed states $\rho_1$ and $\rho_2$.}\label{bs}
\end{figure}
In minimum error state discrimination the aim is to guess one of the states from a given ensemble with minimum error. In the simplest scenario consider that two states $\psi=\ket{\psi}\bra{\psi}$ and $\phi=\ket{\phi}\bra{\phi}$ are prepared with probability $p_{\psi}$ and $p_{\phi}=1-p_{\psi}$. The most general strategy to guess the state perfectly is to perform a measurement $M\equiv\{E_{\psi},E_{\phi}~|~E_{\psi},E_{\phi}\ge 0~\&~E_{\psi}+E_{\phi}=\mathbb{I}\}$ such that the outcome corresponding to $E_{\psi}~(E_{\phi})$ is taken to indicate that the state was $\psi~(\phi)$. The probability of error in determining the state is therefore,
\begin{eqnarray}
p_{er}&=&p_{\psi}\tr(\psi E_{\phi})+p_{\phi}\tr(\phi E_{\psi}),\nonumber\\
&=&p_{\psi}-\tr[(p_{\psi}\psi-p_{\phi}\phi)E_{\psi}].
\end{eqnarray}
Clearly, minimum error will be achieved when $E_{\psi}$ is a projector onto the positive eigenket of the operator $p_{\psi}\psi-p_{\phi}\phi$. The state $\psi$ and $\phi$ span a two-dimensional subspace. Without any loss of generality we can choose an orthogonal basis $\{\ket{0},\ket{1}\}$ and express the states $\psi,\phi$ as
\begin{subequations}
	\begin{align}
	\ket{\psi}&=\cos\theta\ket{0}+\sin\theta\ket{1},\\
	\ket{\phi}&=\cos\theta\ket{0}-\sin\theta\ket{1}.
	\end{align}
\end{subequations}
Accordingly, the eigenvalues of the operator $p_{\psi}\psi-p_{\phi}\phi$ become,
\begin{equation}
a_{\pm}=\frac{1}{2}\left(p_{\psi}-p_{\phi}\pm\sqrt{1-4p_{\psi}p_{\phi}\cos^22\theta} \right), 
\end{equation}
and consequently, the minimal error is given by the so-called Helstrom quantity,
\begin{equation}
p^{\min}_{er}=\frac{1}{2}\left(1-\sqrt{1-4p_{\psi}p_{\phi}|\braket{\psi|\phi}|^2} \right).
\end{equation}
For the particular case $p_{\psi}=p_{\phi}=\frac{1}{2}$, we have $p^{\min}_{er}=\frac{1}{2}\left(1-\sqrt{1-|\braket{\psi|\phi}|^2} \right)=\frac{1}{2}(1-\sin2\theta)$. The optimal measurement is symmetrically located about the input states, {\it i.e.}, $E_{\psi}$ and $E_{\phi}$ are projectors on $\frac{1}{\sqrt{2}}(\ket{0}+\ket{1})$ and $\frac{1}{\sqrt{2}}(\ket{0}-\ket{1})$,  respectively.  A straightforward calculation gives, $\tr({\psi E_{\phi}})=\frac{1}{2}(1-\sin2\theta)=\tr({\phi E_{\psi}})$ establishing compatibility of quantum theory with IS (see Fig. \ref{bs}).

\subsection{Proof of Theorem 1 for even n}
The proof demands to establish that every $\Omega_n$ violates IS for every even $n>4$. As already mentioned in the main text, depending upon the relative positioning of the symmetrically  distinguishable pair of pure states the even-gonal state spaces $\Omega_n$'s are characterized in two classes,
\begin{itemize}
	\item[(I)] $n=4m$, with $m\in\{2,3,\cdots\}$, and
	\item[(II)] $n=4m+2$, with $m\in\{1,3,\cdots\}$.
\end{itemize}  
In the following we treat these two cases separately. 

{\bf Case- I:} In this case we  show that no two even-ordered neighboring pure states can be optimally distinguished with symmetric error measurement. Without loss of any generality we consider one of the states $\omega_{0}$. Then  other even ordered neighboring state relative to this is $\omega_{2l}$. For $n=4m$-type polygon state space, the state $\omega_{2m}$ resides exactly opposite to $\omega_{0}$. The state $\omega_{2m}$ and its immediate predecessor $\omega_{2m-1}$ and successor  $\omega_{2m+1}$ are perfectly distinguishable from $\omega_{0}$. The symmetry of even-gon state space further implies that every pair of neighboring states on either side of $\omega_{0}$ have exactly 
the same status in respect of discrimination, {\it i.e.}, while discriminating $\omega_{0}$ either from $\omega_{k}$ or from $\omega_{n-k}$ the errors are same. So, in the following we consider only the pairs $(\omega_{0},\omega_{2l})$ with $l\in\{1,...,(m-1)\}$.
\begin{figure}[b]
	\includegraphics[scale=0.35]{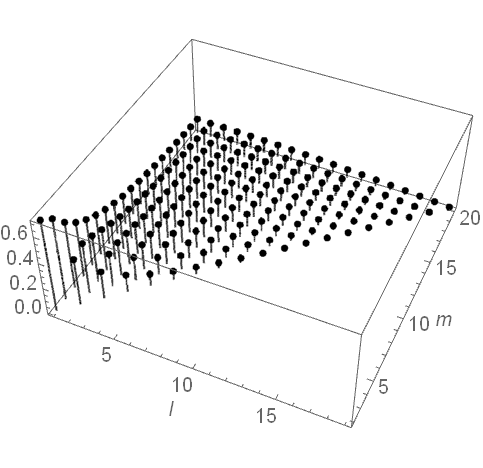}
	\caption{Absolute difference between the two error probabilities in Eqs.(\ref{4m}) is plotted against $m\ge2,~ m\in \mathbb{Z}$ and $l\le(m-1),~ l\in \mathbb{Z}$, where $4m$ is the corresponding even-gon structure and $l$ as mentioned in the text.}\label{fig2}
\end{figure}

The angle between the vectors $\hat{\omega}_{0}$ and $\hat{\omega}_{2l}$ is given by
\begin{equation*}
\theta_{l}=2l\frac{2\pi}{n},
\end{equation*}
and the angle between $\hat{e}_{n-k}$ and $(\hat{\omega}_{2l}-\hat{\omega}_{0})$ is 
\begin{equation*}
\theta_{k}= \frac{\pi}{2}+\frac{\theta_{l}}{2}+(2k+1)\frac{\pi}{n}.
\end{equation*}
To optimize the total error in this discrimination task, $k$ should be so chosen that $|\theta_{k}-\pi|\to 0$. Hence, $k$ will be the closest integer to $(m-l+\frac{1}{2})$, i.e., either $k=(m-l)$ or $k=(m-l+1)$. Straightforward calculations lead to,
\begin{subequations}\label{4m}
	\begin{align}
	p(e_{n-k}|\omega_{2l})&=\frac{1}{2}\left( 1+r_{n}^{2}\cos\left[(4l+2k+1)\frac{\pi}{n}\right] \right),\\
	p(\bar{e}_{n-k}|\omega_{0})&=\frac{1}{2}\left(1-r_{n}^{2}\cos\left[ (2k+1)\frac{\pi}{n}\right]\right).
	\end{align}
\end{subequations}
For symmetric error we require $p(e_{n-k}|\omega_{2l})=p(\bar{e}_{n-k}|\omega_{0})$, which further implies $\cos\left[ \frac{l\pi}{m}+(2k+1)\frac{\pi}{n}\right] =-\cos\left[ (2k+1)\frac{\pi}{n}\right] $, i.e., $\frac{l\pi}{m}$ is an odd multiple of $\pi$, which is not possible since $l\in\{1,...,(m-1)\}$. (see Fig. \ref{fig2}).
\begin{figure}[t]
	\includegraphics[scale=0.45]{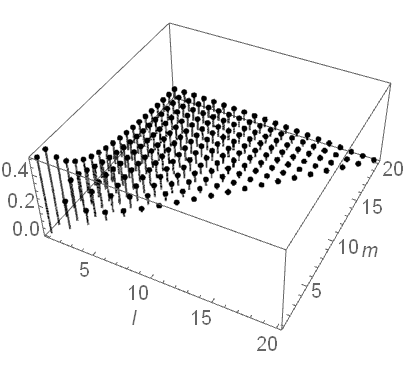}
	\caption{Absolute difference between the two error probabilities in Eqs.(\ref{4m+2}) is plotted against $m\ge2,~ m\in \mathbb{Z}$ and $l\le m,~ l\in \mathbb{Z}$, where $4m+2$ is the corresponding even-gon structure and $l$ as mentioned in the text.}\label{fig3}
\end{figure}

{\bf Case- II:} Without loss of generality, in this case it is sufficient to choose the pairs of states $\omega_{0}$ and $\omega_{2l-1}$ with $l\in\{1,2,...,m\}$.
The angle between vectors $\hat{\omega}_{0}$ and $\hat{\omega}_{2l-1}$ is given by
\begin{equation*}
\theta_{l}=(2l-1)\frac{2\pi}{n},
\end{equation*}
and the angle between $\hat{e}_{n-k}$ and the vector $(\hat{\omega}_{2l-1}-\hat{\omega}_{0})$ is
\begin{equation*}
\theta_{k}=\frac{\pi}{2}+\frac{\theta_{l}}{2}+(2k+1)\frac{\pi}{n}.
\end{equation*} 
Further calculation gives,
\begin{subequations}\label{4m+2}
	\begin{align}
	p(e_{n-k}|\omega_{2l-1})&=\frac{1}{2}\left( 1+r_{n}^{2}\cos\left[ \theta_{l}+(2k+1)\frac{\pi}{n}\right] \right),\\
	p(\bar{e}_{n-k}|\omega_{0})&=\frac{1}{2}\left( 1-r_{n}^{2}\cos\left[ (2k-1)\frac{\pi}{n}\right] \right).
	\end{align}
\end{subequations}
For symmetric errors we require $\cos\left[ (4l+2k-1)\frac{\pi}{n}\right] =\cos\left[ (2k-1)\frac{\pi}{n}\right] $, i.e., $\frac{4l\pi}{n}=\frac{2l\pi}{2m+1}$ is an odd multiple of $\pi$, which is again not possible. However, an interested reader can see Fig. \ref{fig3} to get an idea that how the asymmetric error difference is changing.

\subsection{Argument supporting Observation 1}
The nontrivial supporting hyperplane corresponding to any pure effect can be visualized as a first order face of the corresponding state space.
Any pair of pure states in squit theory always lie on a pair of parallel nontrivial supporting hyperplanes of the effect space. Therefore, these two states can be perfectly discriminated by a measurement consisting of two pure effects on these two parallel hyperplanes.

\subsection{GIS: Squit state space and quantum theory}
\begin{center}
\begin{figure}[h!]
	\includegraphics[scale=0.3]{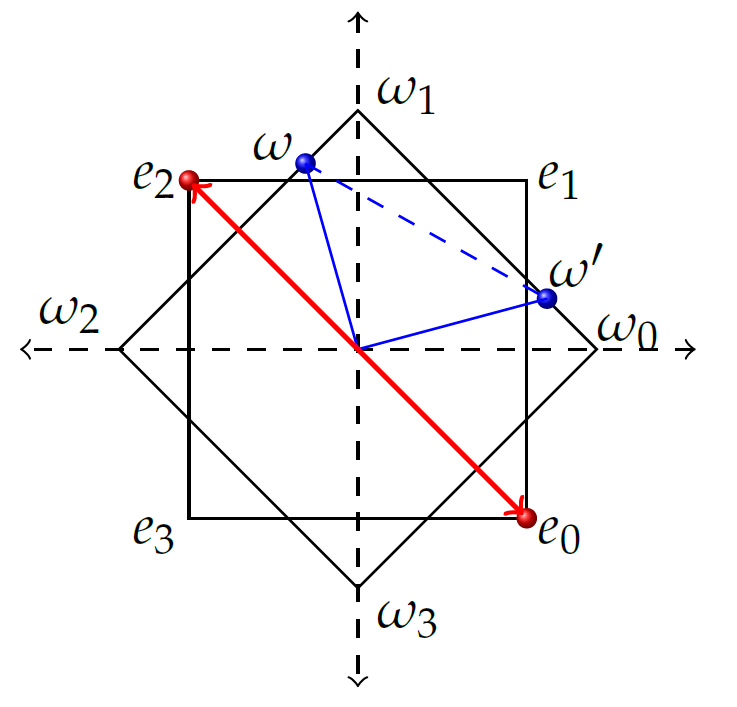}
		\caption{(Color on-line) Projection of the state and effect cones of the squit model on the normalized state plane.}\label{squit}
	\end{figure}
\end{center}
{\bf Squit state space:} Here our aim is to show that the squit theory is not compatible with GIS. Consider the symmetric MESD between two mixed states $\omega$ and $\omega^{\prime}$ as shown in Fig \ref{squit}. Note that the states are of minimal ignorance type and both have same subjective ignorance. We  now show that though a pair of mixed effects can discriminate the states with symmetrically distributed error, but the total error in that case is not minimum; rather the minimum error is obtained for pure effects with error corresponding to one effect being zero, and full for the other one.

Consider the decomposition of $\omega$ and $\omega^{\prime}$ in terms of pure states:
\begin{subequations}
	\begin{align}
	\omega&=p\omega_1+(1-p)\omega_2,\\
	\omega^{\prime}&=p\omega_0+(1-p)\omega_1.
	\end{align}
\end{subequations}
To discriminate these states, consider a measurement $M\equiv\{e,e^{\prime}|~e=re_2+(1-r)e_3~\&~e^{\prime}=u-e\}$ to guess the state as $\omega~(\omega^{\prime})$ when the effect $e~(e^{\prime})$ clicks. A straightforward calculation shows  $p(e^{\prime}|\omega)= p(1-r)$ and $p(e|\omega^{\prime})=r(1-p)$. For $p=r$, both the errors are identical and accordingly, the total error turns out to be $p^{(s)}_{er}= 2p(1-p)$.

Consider now a different measurement consisting of two pure effects $\{e_2,e_4\}$. If the given state is $\omega$, then the effect $e_2$ clicks certainly, and $e_4$ never clicks, whereas, for $\omega'$ both the effects click. The error turns out to be $p_{er}=(1-p)$. Clearly, for $p>1/2$ we have $p_{er}<p^{(s)}_{er}$ establishing that the squit theory does not satisfy GIS.

{\bf Quantum theory:} GIS applies to pairs of states each having identical {\it minimal} type subjective ignorance. A quantum state having minimal type subjective ignorance is nothing but a density operator of rank 2. While two such states are considered then the analysis effectively boils down to the Bloch sphere and two such states having identical subjective ignorance lies in same distance from the center of the Bloch sphere  (Fig. \ref{bs}). The geometry of the Bloch sphere thus assures that GIS is satisfied in quantum theory.

\subsection{IS and the Spekkens toy-bit model}
Spekkens {\it toy-bit} theory was constructed in support of an epistemic view of quantum states \cite{Spekkens07}. This theory encompasses a wide variety of quantum phenomena along with the existence of nonorthogonal states that are impossible to discriminate perfectly. This model is based on a principle, called {\it knowledge balance principle} (KBP) which states that -- {\it ``If one has maximal knowledge, then for every system, at every time, the amount of knowledge one possesses about the ontic state of the system at that time must equal the amount of knowledge one lacks".} 

The most elementary system consists of four ontic states denoted as $`1'$, $`2'$, $`3'$, and $`4'$. This elementary system has only six epistemic states of maximal knowledge that are compatible with KBP:
\begin{align}
\left\{\!\begin{aligned}
1\vee 2,~~~~3\vee 4,~~~~1\vee 3,\\
2\vee 4,~~~~2\vee 3,~~~~1\vee 4~
\end{aligned}\right\}.	
\end{align}
Here the symbol $`\vee'$ denotes disjunction and it reads as `or'. For this elementary system, the only epistemic state with non-maximal knowledge is given by,
\begin{equation}
1\vee 2\vee 3\vee 4.
\end{equation}
The epistemic states of the toy theory can be treated as
uniform probability distributions over the ontic states. For instance, while probability distribution $(1/2, 1/2, 0, 0)$ is associated with the state $1\vee 2$, the state $1\vee 2\vee 3\vee 4$ corresponds to the distribution $(1/4, 1/4, 1/4, 1/4)$. Note that this particular toy model is not a GPT in true sense as it does not allow all possible convex mixtures of pure states as valid states. While the states $1\vee 2$ and $3\vee 4$ are mutually orthogonal as they have non-overlapping probability distributions over the ontic states, $1\vee 2$ and $1\vee 3$ correspond to non orthogonal states.

KBP also imposes restrictions on possible implementable measurements in this toy theory. The fewest ontic states that can be associated with a single outcome of a measurement is two. Thus, the
only valid reproducible measurements are those which partition the four ontic states into two sets of two ontic states. Therefore we have three possible measurements
\begin{align}
\left\{\!\begin{aligned}
M_1:=\left\lbrace 1\vee 2,~~3\vee 4 \right\rbrace ,\\
M_2:=\left\lbrace 1\vee 3,~~2\vee 4 \right\rbrace ,\\
M_3:=\left\lbrace 1\vee 4,~~2\vee 3 \right\rbrace ~
\end{aligned}\right\}.	
\end{align}
To discriminate the pair of nonorthogonal states $1\vee 2$ and $1\vee 3$, while the measurements $M_3$ is no good for minimal error discrimination, the measurements $M_1$ and $M_2$ do not provide symmetric errors. This establishes that toy-bit theory is incompatible with the principle of information symmetry.

\end{document}